\newtheorem{theorem}{Theorem}
\newtheorem{lemma}{Lemma}
\newtheorem{corollary}{Corollary}
\newtheorem{problem}{Problem}
\newtheorem{remark}{Remark}
\begin{document}
\title{Propelinear 1-perfect codes from quadratic functions}
\author{Denis~S.~Krotov 
       and~Vladimir~N.~Potapov% <-this % stops a space
\thanks{D. Krotov and V. Potapov are with the Sobolev Institute of Mathematics,
pr. Akademika Koptyuga 4, Novosibirsk 630090, Russia, and 
with the Novosibirsk State University, Pirogova 2, Novosibirsk 630090, Russia;
e-mail: krotov@math.nsc.ru, vpotapov@math.nsc.ru.}% <-this % stops a space
\thanks{Manuscript received October 19, 2013; revised January 22, 2014.
The work 
was supported in part 
by the RFBR (project 13-01-00463), 
by the Ministry of education 
and science of Russian Federation 
(project 8227),
and by the Target program of SB RAS 
for 2012-2014 (integration project No. 14). 
The material in this paper was presented in part at the Mal'tsev Meeting, Novosibirsk, Russia, November 12-16, 2012.}%
\thanks{Copyright \copyright 2014 IEEE. Personal use of this material is permitted.  However, permission to use this material for any other purposes must be obtained from the IEEE by sending a request to pubs-permissions@ieee.org}%
\thanks{Digital Object Identifier 
\href{http://dx.doi.org/10.1109/TIT.2014.2303158}{10.1109/TIT.2014.2303158}}}

% The paper headers
\markboth{IEEE~Transactions~on~Information~Theory,\quad
\url{http://dx.doi.org/10.1109/TIT.2014.2303158}}%
{Krotov and % \MakeLowercase{\textit{and}} 
Potapov: Propelinear 1-perfect codes from quadratic functions}

\maketitle

\begin{abstract}
Perfect codes obtained 
by the Vasil'ev--Sch\"onheim construction 
from a linear base code 
and quadratic switching functions
are transitive and, moreover, propelinear. 
This gives at least $\exp(cN^2)$ 
propelinear $1$-perfect codes of length $N$
over an arbitrary finite field, 
while an upper bound
on the number of transitive codes is 
$\exp(C(N\ln N)^2)$.
\end{abstract}
\begin{IEEEkeywords}
perfect code, propelinear code, 
transitive code, automorphism group.
\end{IEEEkeywords}

%========================================
\section{Introduction}
\IEEEPARstart{U}{sually}, a group code is defined as a subgroup
of the additive group 
of a finite vector space.
There are alternative approaches 
\cite{Nechaev82,RBH:89,%
Nechaev:Kerdock_cyclic_form,%
HammonsOth:Z4_linearity,Carlet:Z2k-linear,%
HonNech,Kro:2007Z2k} 
that allow
to relate the codewords of a code 
with the elements 
of some group.
Usually, the mapping 
from the group to the code
is required to satisfy 
some metric properties,
because the distance is 
what is very important 
for error-correcting codes.
One of the approaches considers 
so-called propelinear codes, 
introduced in \cite{RBH:89} 
for the binary space.
The codewords of a propelinear code $C$
are in one-to-one correspondence 
with a group $G$ of isometries of the space
that acts regularly on the code itself.
In other words, 
given some fixed codeword $v\in C$
(say, the all-zero word), 
every other codeword
can be uniquely written as $g(v)$, $g\in G$.
Every propelinear code is transitive; 
that is, it is an orbit
of a group of isometries of the space
(for a transitive code in general, 
this group is not required to act regularly).

In the current paper, 
we will prove that the number 
of nonequivalent
propelinear codes with the same parameters,
namely, 
the parameters of $1$-perfect codes
over an arbitrary finite field,
grows at least exponentially with respect 
to the square of the code length 
(Corollary~\ref{cor:lb}). 
By the order of the logarithm, 
this number is comparable 
with the total number 
of propelinear codes (Theorem~\ref{p:up}).
In contrast, there is only one (up to equivalence) 
linear $1$-perfect code
for each admissible length, but the number of non-linear
$1$-perfect codes grows doubly-exponentially 
\cite{Vas:nongroup_perfect,Schonheim68}.

For the case $q=2$, 
an exponential lower bound
(with respect to the square root 
of the code length, to be more accurate)
on the number of transitive and the number of propelinear $1$-perfect codes 
was firstly established in
\cite{Pot:trans} and \cite{BRMS:num_prop}, 
respectively.
Here, we will show how to improve the lower bound 
and generalize it to an arbitrary prime power $q$,
using a rather simple construction.
Some other constructions 
of transitive and propelinear perfect codes 
can be found in 
\cite{BorRif:1999},
\cite{Kro:2000:Z4_Perf},
\cite{Sol:2005transitive},
\cite{BRMS:2012}.

Section~\ref{s:pre} contains definitions and auxiliary lemmas.
In Section~\ref{s:main}, we formulate the main results of the paper.
The main theorem is proven in Section~\ref{s:proof}.
In Section~\ref{s:rem}, 
we consider some remarks and examples
concerning the structure of the group 
related to a propelinear code,
survey the transitive (propelinear) Vasil'ev codes of length $15$,
and discuss a problem about 
functions that can result in transitive codes.
%========================================
\section{Preliminaries}\label{s:pre}
Let $F$ be a finite field of order $q$, where $q$ is a power of prime; 
let $F^n$ be the vector space 
of all words of length $n$ over the alphabet $F$.
An arbitrary subset of $F^n$ 
is referred to as a code. 
A code is linear if it is a vector subspace of $F^n$.
A code $C\subset F^n$ is called $1$-perfect 
if for every word $v$ from $F^n$ 
there is exactly one $c$ in $C$
agreeing with $v$ in at least $n-1$ positions.
It is well known that $1$-perfect codes exist if and only if
$n=(q^k-1)/(q-1)$ for some integer $k$, see e.g. \cite{MWS}. 
%
%-----------------------------------------
\subsection{Vasil'ev--Sch\"onheim construction}

Let $H\subset F^n$, and let $f:H \to F$ 
be an arbitrary function. Define the set
\begin{IEEEeqnarray*}{rcrl}
 \displaystyle
C(H,f) &{}={}& 
\Big\{ ((v_\alpha)_{\alpha\in F},z) \, : \, 
   v_\alpha\in F^n, 
   \ \sum_{\alpha\in F}v_\alpha = c \in H,  \\ 
  && z = \sum_{\alpha\in F} \alpha|v_\alpha| + f(c) 
 &\Big\}
\end{IEEEeqnarray*}
where $(v_\alpha)_{\alpha\in F}$ is treated 
as the concatenation of the words $v_\alpha$ 
(which will be referred to as \emph{blocks})
in some prefixed order,
$|v_\alpha|$ is the sum of all $n$ elements 
of $v_\alpha$.
If $H$ is a $1$-perfect code, 
then $C(H,f)$ is a $1$-perfect code in  $F^{qn+1}$,
known as a Sch\"onheim code \cite{Schonheim68} 
(in the case $q=2$, as a Vasil'ev code 
\cite{Vas:nongroup_perfect}). 
Clearly, the set $C(H,f)$ essentially depends 
on the choice of the function $f$:

\begin{lemma}\label{l:dif}
For a fixed $H$, 
different functions $f$ result in different $C(H,f)$.
\end{lemma}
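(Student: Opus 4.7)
The plan is to prove the contrapositive: assuming $f\neq g$, I will exhibit a single word that lies in $C(H,f)$ but not in $C(H,g)$. Since $f\neq g$, there is some $c\in H$ with $f(c)\neq g(c)$; fix such a $c$.

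Next, I want the simplest possible representative of $C(H,f)$ whose associated codeword is this $c$. The natural choice is to put $v_0 = c$ and $v_\alpha = 0$ for every $\alpha\in F\setminus\{0\}$. Then $\sum_{\alpha\in F} v_\alpha = c\in H$ and $\sum_{\alpha\in F}\alpha|v_\alpha|=0$, so the word
\[
w \;=\; \bigl((c,0,\ldots,0),\, f(c)\bigr)\in F^{qn+1}
\]
belongs to $C(H,f)$.

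Finally, I check that $w\notin C(H,g)$. Because the blocks are concatenated in a prefixed order, the decomposition of $w$ into the tuple $(v_\alpha)_{\alpha\in F}$ and the last coordinate $p=f(c)$ is read off uniquely from $w$ itself; in particular the value of $\sum_\alpha v_\alpha$ is forced to be $c$. Hence for $w$ to lie in $C(H,g)$ one would need $f(c) = \sum_\alpha\alpha|v_\alpha| + g(c) = g(c)$, contradicting the choice of $c$.

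There is really no obstacle here: the only subtlety worth flagging is the uniqueness of the block decomposition, which is immediate from the prefixed ordering in the definition of $C(H,f)$. The argument uses nothing about $H$ being a $1$-perfect code, so it applies to the construction in full generality.
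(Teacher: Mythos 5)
Your proof is correct and rests on the same key observation as the paper's: for any codeword, the block decomposition and the last coordinate $p$ are read off uniquely, and $p-\sum_{\alpha}\alpha|v_\alpha|$ then recovers $f(c)$. The paper phrases this as reconstructing the whole graph $\{(x,f(x)):x\in H\}$ from $C(H,f)$, whereas you specialize to a single witness word with $v_0=c$ and the other blocks zero; this is only a difference in presentation, not in substance.
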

\begin{IEEEproof}
The graph of the function $f$ can be reconstructed 
from the set $C(H,f)$:
\begin{IEEEeqnarray*}{rcrl}
\{ (x,f(x)) \,:\, x\in H\} &{}={}&
\Big\{ \Big(\sum_{\alpha\in F}v_\alpha,~
  z-\sum_{\alpha\in F} \alpha|v_\alpha|\Big) \,:\, \\
  &&  ((v_\alpha)_{\alpha\in F},z)\in C(H,f)&\Big\}.
\end{IEEEeqnarray*}
Hence, $C(H,f)=C(H,f')$ implies $f=f'$.
\end{IEEEproof}

%-----------------------------------------
\subsection{Automorphisms, equivalence, 
transitivity, propelinearity}

The \emph{Hamming graph} $G(F^n)$ is defined
on the vertex set $F^n$; 
two words are connected by an edge 
if and only if they differ 
in exactly one position.
It is known 
(see, e.g., \cite[Theorem~9.2.1]{Brouwer}) 
that every automorphism $\Pi$
of $G(F^n)$ is composed from 
a coordinate permutation $\pi$
and alphabet permutations $\psi_i$
in each coordinate:
$\Pi(x)=(\psi_1(x_{\pi^{-1}(1)}),
  \ldots,\psi_n(x_{\pi^{-1}(n)}))$.
Two codes are said to be \emph{equivalent}
if there is an automorphism of $G(F^n)$ 
that maps one of the codes into the other.
Note that 
the algebraic properties
of the code, such as being a 
linear or affine subspace,
are not invariant with respect to 
this combinatorial equivalence, in general.
The \textit{automorphism group} 
$\mathrm{Aut}(C)$ of a code $C$
consists of all automorphisms of $G(F^n)$
that stabilize (fix set-wise) $C$.
A code $C$ containing the all-zero 
word $\bar 0$ is \emph{transitive} 
if for every codeword $a$ 
there exists $\varphi_a \in \mathrm{Aut}(C)$ 
that sends $\bar 0$ to $a$. 
If, additionally, the set 
$\{\varphi_a\,:\,a\in C\}$ 
is closed under composition 
(that is, for all $a$ and $b$ from $C$ 
we have $\varphi_a\varphi_b = \varphi_c$, 
where $c=\varphi_a\varphi_b(\bar 0)$),
then $C$ is a \emph{propelinear} code, 
see e.g. \cite{BRMS:num_prop}.

%-----------------------------------------
\subsection{Quadratic functions}

Assume $H$ is a subspace of $F^n$.
A function $f:H \to F$ is called \textit{quadratic} 
if it can be represented 
as a polynomial of degree at most $2$.

We will use the following elementary 
property of the quadratic functions 
(actually, it is a characterizing property).
\begin{lemma}\label{l:qua}
Let $H$ be a subspace of $F^n$.
If $f:H \to F$ is a quadratic function,
then for every $c\in H$ there exist 
$\beta_0^c,\beta_1^c,\ldots,\beta_n^c \in F$ such that
\begin{IEEEeqnarray}{c}\label{eq:qua}
f(x+c) = 
f(x) + \beta_0^c + \beta_1^c x_1 + 
            \ldots + \beta_n^c x_n \\ \nonumber
 \mbox{for all $x=(x_1,\ldots,x_n)\in H$}.
\end{IEEEeqnarray}
Moreover, $\beta_i^c$, $i\in\{1,\ldots,n\}$, 
depends linearly on $c$: 
$$\beta_i^{c+d}=\beta_i^c+\beta_i^d.$$
\end{lemma}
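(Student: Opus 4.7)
The plan is to exploit a polynomial representative of $f$ directly. Using the quadratic hypothesis, I would fix some polynomial $P \in F[x_1, \ldots, x_n]$ of total degree at most $2$ such that $f(x) = P(x)$ for all $x \in H$.

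The key observation is that the second difference
$$
B(x,c) := P(x+c) - P(x) - P(c) + P(0)
$$
is a symmetric bilinear form on $F^n \times F^n$. I would verify this monomial-by-monomial in the expansion of $P$: the constant and linear monomials contribute $0$ to $B$, while each quadratic monomial contributes the expected cross-term (namely $a_{ij}(x_i c_j + x_j c_i)$ off-diagonally and $2 a_{ii} x_i c_i$ on the diagonal), all of which are manifestly bilinear. Hence, for each fixed $c$, the map $x \mapsto B(x,c)$ is $F$-linear, so one can write $B(x,c) = \sum_{i=1}^n L_i(c)\, x_i$ for some $F$-linear forms $L_i : F^n \to F$.

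Finally, rearranging the definition of $B$ and restricting to $H$ (noting that $x+c \in H$ whenever $x, c \in H$) yields
$$
f(x+c) = f(x) + \bigl(f(c) - f(0)\bigr) + \sum_{i=1}^n L_i(c)\, x_i,
$$
which is the identity (\ref{eq:qua}) with the choice $\beta_0^c := f(c) - f(0)$ and $\beta_i^c := L_i(c)$ for $1 \le i \le n$; the ``moreover'' clause then follows at once from the $F$-linearity of each $L_i$. There is really no obstacle here, since the argument reduces to the routine bilinearity of the second difference of a degree-$\le 2$ polynomial; the only point worth noting is the diagonal quadratic monomial in characteristic $2$, where $2 a_{ii} x_i c_i$ vanishes, but this merely means that monomial contributes nothing to $B$, and bilinearity (and hence the rest of the argument) is preserved uniformly over all fields.
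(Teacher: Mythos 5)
Your proof is correct and follows essentially the same route as the paper's: both arguments reduce to expanding the quadratic monomials under the shift $x\mapsto x+c$ and observing that the resulting coefficients of $x_i$ are linear in $c$. Your packaging via the second difference $B(x,c)$ is just a slightly more structured version of the same computation (with the bonus of identifying $\beta_0^c=f(c)-f(0)$ explicitly), and your handling of the diagonal term in characteristic $2$ is fine.
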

\begin{IEEEproof}
The difference of $(x_i+c_i)(x_j+c_j)$ 
and $x_ix_j$ has degree at most $1$. 
Moreover, the coefficients at $x_i$ and $x_j$ 
in this difference depend linearly on $c$.
Hence, the same is true for the difference of $P(x+c)$ 
and $P(x)$ for every polynomial $P$
of degree at most $2$.
\end{IEEEproof}

\begin{lemma}\label{l:nqua}
Let $H$ be an $m$-dimensional subspace of $F^n$.
There are at least $q^{{m^2}/2}$ 
different quadratic functions from $H$ to $F$. 
\end{lemma}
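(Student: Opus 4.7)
The plan is to exhibit an explicit family of $q^{m(m+1)/2}$ quadratic functions on $H$ that are pairwise distinct as functions $H\to F$. Since $m(m+1)/2\ge m^2/2$, this suffices.

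First I would fix a basis $v_1,\ldots,v_m$ of $H$, which identifies $H$ with $F^m$ via $(t_1,\ldots,t_m)\mapsto \sum_i t_i v_i$. Under this identification, any polynomial of degree at most $2$ in $x_1,\ldots,x_n$ restricts to a polynomial of degree at most $2$ in $t_1,\ldots,t_m$. Conversely, taking the dual basis of any completion of $v_1,\ldots,v_m$ to a basis of $F^n$, every polynomial of degree at most $2$ in $t_1,\ldots,t_m$ comes from a polynomial of degree at most $2$ in $x_1,\ldots,x_n$. So counting distinct quadratic functions on $H$ reduces to counting distinct functions $F^m\to F$ that arise from polynomials of degree at most $2$ in $t_1,\ldots,t_m$.

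Next I would consider the subfamily $p_{a,b}(t)=\sum_{1\le i<j\le m} a_{ij}\,t_i t_j + \sum_{i=1}^m b_i\, t_i$, indexed by tuples $(a_{ij})_{i<j}$ and $(b_i)_i$ in $F$. There are $q^{m(m-1)/2 + m}=q^{m(m+1)/2}$ such polynomials. To show they define pairwise distinct functions, I would evaluate at the standard basis vectors and at their pairwise sums: $p_{a,b}(e_k)=b_k$ recovers each linear coefficient, and afterwards $p_{a,b}(e_k+e_l)-b_k-b_l=a_{kl}$ recovers each quadratic coefficient for $k<l$. Thus distinct parameter tuples give distinct functions, so we obtain at least $q^{m(m+1)/2}\ge q^{m^2/2}$ distinct quadratic functions.

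No real obstacle arises. The reason I omit the diagonal monomials $t_i^2$ is that over small fields they need not be independent of linear monomials (for instance $t^2=t$ over $\mathbb{F}_2$); restricting to off-diagonal products $t_i t_j$ sidesteps this issue uniformly in $q$, and the linear terms are added purely to raise the exponent from $m(m-1)/2$ to $m(m+1)/2$, which clears the threshold $m^2/2$.
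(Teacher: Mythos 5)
Your proof is correct and follows essentially the same route as the paper: a linear change of coordinates identifies $H$ with $F^m$ (the paper takes $H$ to be spanned by the first $m$ coordinates), after which one counts polynomials of degree at most $2$ in $m$ variables. Your version is slightly more explicit in verifying that distinct parameter tuples give distinct \emph{functions} (by evaluating at $e_k$ and $e_k+e_l$), and by omitting the diagonal monomials $t_i^2$ you avoid the paper's case split between $q=2$ (where $x_i^2\equiv x_i$) and $q>2$.
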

\begin{IEEEproof}
Obviously, a linear transformation 
of the space does not affect 
to the property of a function
to be quadratic. 
Hence, we can assume 
without loss of generality that 
$H$ consists of 
the words of length $n$ with zeroes in the last $n-m$ positions. 
Then, the number of different quadratic functions 
is the number of polynomials of degree at most $2$ 
in the first $m$ variables, 
i.e., $q^{m(m-1)/2+m+m+1}$ for $q>2$ 
and $q^{m(m-1)/2+m+1}$ for $q=2$ 
(when $x_i^2\equiv x_i$ (mod $2$)).
\end{IEEEproof}

%========================================
\section{Main results}\label{s:main}
\subsection{Lower bound}
In the Section~\ref{s:proof}, 
we will prove the following theorem.
\begin{theorem}\label{th} 
If $H\subset F^n$ is a linear $q$-ary code 
and $f:H \to F$ is a quadratic function, $f(\bar 0)=0$,
then $C(H,f)$ is a propelinear code of length $N=qn+1$.
\end{theorem}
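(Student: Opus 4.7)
The plan is to construct, for every codeword $a\in C(H,f)$, an isometry $\varphi_a$ of the Hamming graph on $F^{qn+1}$ of the form $\varphi_a(w)=\sigma_a(w)+a$, where $+$ is coordinate-wise addition in $F^{qn+1}$ and $\sigma_a$ is a coordinate permutation depending on $a$. Any such map is automatically an isometry with $\varphi_a(0)=a$; the work lies in choosing $\sigma_a$ so that $\varphi_a$ stabilises $C(H,f)$, and then verifying that $\{\varphi_a:a\in C(H,f)\}$ is closed under composition.

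I would begin by explaining why the naive choice $\sigma_a=\mathrm{id}$ fails. If $w\in C(H,f)$ has block sum $c_w$ and $a$ has block sum $c_a$, then $w+a$ has block sum $c_w+c_a\in H$, and after cancellation the parity-check discrepancy between $p_w+q_a$ and the value required for membership in $C(H,f)$ equals $f(c_w+c_a)-f(c_w)-f(c_a)$, which by Lemma~\ref{l:qua} (together with $f(0)=0$) is the linear form $\langle\beta^{c_a},c_w\rangle$ in $c_w$. The task is therefore to find a second isometry, depending on $a$, that produces exactly this linear correction on the parity-check coordinate when composed with translation.

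The key construction is a coordinate permutation that ``shifts blocks'' by a vector $t=(t_1,\dots,t_n)\in F^n$: let $\sigma_t$ send position $i$ of block $\alpha$ to position $i$ of block $\alpha+t_i$, and fix the parity-check coordinate. A direct calculation shows that $\sigma_t$ preserves the block sum $\sum_\alpha v_\alpha$ while adding exactly $\langle t,\sum_\alpha v_\alpha\rangle$ to $\sum_\alpha\alpha|v_\alpha|$, so $\sigma_t(C(H,f))=C(H,f-\langle t,\cdot\rangle)$. Setting $\sigma_a:=\sigma_{-\beta^{c_a}}$ therefore cancels the defect identified above, and one verifies that $\varphi_a(w):=\sigma_a(w)+a$ maps $C(H,f)$ into itself. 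Tracking carefully how $\sigma_a$ interacts with the parity check after the subsequent translation by $a$ is the principal obstacle; everything else is bookkeeping built on this choice of $t$.

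Closure under composition will then follow from the elementary identity $\sigma_t\sigma_s=\sigma_{t+s}$ together with the $F$-linearity $\beta^{c+d}=\beta^c+\beta^d$ from Lemma~\ref{l:qua}. Expanding gives $\varphi_a\circ\varphi_b(w)=\sigma_{t_a+t_b}(w)+(a+\sigma_{t_a}(b))$. The candidate product $a*b:=a+\sigma_{t_a}(b)=\varphi_a(b)$ is a codeword because $\varphi_a$ stabilises $C(H,f)$. Since $\sigma_{t_a}$ preserves block sums, $c_{a*b}=c_a+c_b$, hence by linearity $t_{a*b}=-\beta^{c_a+c_b}=-\beta^{c_a}-\beta^{c_b}=t_a+t_b$, so $\varphi_a\circ\varphi_b=\varphi_{a*b}$. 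This establishes that $\{\varphi_a\}$ acts regularly on $C(H,f)$ by isometries, which is the propelinear property.
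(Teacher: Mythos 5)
Your proposal is correct and follows essentially the same route as the paper: your block-shift permutation $\sigma_t$ is exactly the paper's $\Pi^c=\Pi_1^{\beta_1^c}\cdots\Pi_n^{\beta_n^c}$ (defined there coordinate-by-coordinate via Lemma~\ref{l:1}, which is your identity $\sigma_t(C(H,f))=C(H,f-\langle t,\cdot\rangle)$ for a single coordinate), your map $\varphi_a(w)=\sigma_{-\beta^{c_a}}(w)+a$ coincides with the paper's $\Phi_w(v)=w+\Pi^c(v)$, and your closure argument via $\sigma_t\sigma_s=\sigma_{t+s}$ and the linearity $\beta^{c+d}=\beta^c+\beta^d$ is the paper's Lemma~\ref{l:pipi}. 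No substantive differences.
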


\begin{corollary}\label{cor:lb}
The number of nonequivalent propelinear 
$1$-perfect $q$-ary codes 
of length $N=(q^k-1)/(q-1)$ obtained by
the Vasil'ev--Sch\"onheim construction 
is at least $q^{\frac{N^2}{2q^2}+O(N\ln N)}$.
% \footnote{As usual, $o(\phi(N))$ 
% (respectively, $O(\phi(N))$) 
% denotes a function $\psi$ such that $|{\psi(N)}/{\phi(N)}|$ tends to zero
% (respectively, is bounded) 
% as $N$ grows.}%
\end{corollary}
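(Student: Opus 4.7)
The plan is to combine three ingredients already in place: Theorem~\ref{th} guarantees that every quadratic $f\colon H\to F$ with $f(0)=0$ yields a propelinear $1$-perfect code $C(H,f)$; Lemma~\ref{l:dif} says that distinct $f$ yield distinct codes; and Lemma~\ref{l:nqua} lower-bounds the number of such $f$. First I would specialize to the relevant Hamming setting: for an admissible length $N=qn+1$, let $H\subset F^n$ be a linear $1$-perfect code, so that $\dim H=m=n-r$ with $(q-1)n+1=q^r$, and hence $m=n-O(\log n)$.

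Second, I would count. By Lemma~\ref{l:nqua} there are at least $q^{m^2/2}$ quadratic functions $H\to F$; imposing the extra condition $f(0)=0$ costs at most a factor of $q$, so at least $q^{m^2/2-1}$ remain. Theorem~\ref{th} promotes each of these to a propelinear $1$-perfect code of length $N$, and Lemma~\ref{l:dif} ensures that the resulting codes are pairwise distinct as subsets of $F^N$.

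Third, I would pass from ``distinct'' to ``nonequivalent'' by dividing by the size of an equivalence class. Since $\mathrm{Aut}(G(F^N))$ has order $N!(q!)^N$, each equivalence class contains at most that many codes. Taking logarithms base~$q$, this correction is only $O(N\log N)$, whereas $m^2/2=\Theta(N^2)$; the polynomial-in-$N$ factor is therefore absorbed into the $(1+o(1))$, and substituting the asymptotic expression of $m$ in terms of $N$ yields the claimed bound. The substantive step---establishing propelinearity---is entirely deferred to Theorem~\ref{th}, so here there is no real obstacle beyond this bookkeeping.
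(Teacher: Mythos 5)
Your proposal is correct and follows essentially the same route as the paper: combine Theorem~\ref{th}, Lemma~\ref{l:dif}, and Lemma~\ref{l:nqua} to get at least $q^{m^2/2(1+o(1))}$ distinct propelinear codes with $m=n-\log_q(nq-n+1)$, then divide by the order of $\mathrm{Aut}(G(F^N))$, which only perturbs the exponent by $O(N\log N)$. Your explicit factor-of-$q$ accounting for the constraint $f(0)=0$ is a small point the paper leaves implicit, but it changes nothing asymptotically.
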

\begin{IEEEproof}
As follows from Theorem~\ref{th}, 
Lemma~\ref{l:dif}, and Lemma~\ref{l:nqua},
the number of different propelinear $1$-perfect 
codes of type $C(H,f)$ is at least 
$q^{\frac{m^2}2}$, where $m=n-\log_q (nq-n+1)$
and $n$ is the length of $H$. 
Since $N=qn+1$, we see that 
$q^{\frac{m^2}2}=q^{\frac{N^2}{2q^2}+O(N\ln N)}$.
To evaluate the number of nonequivalent codes,
we divide this number by the number 
$N!(q!)^N=q^{O(N\ln N)}$ 
of all automorphisms of $F^N$ and find that this
does not affect on the essential part of the formula.
\end{IEEEproof}

%-----------------------------------------
\subsection{Upper bound}

To evaluate how far our lower bound 
on the number of transitive (propelinear)
$1$-perfect codes can be 
from the real value, we derive an upper bound:
\begin{theorem}\label{p:up}
{\rm (a)} The number of different 
transitive codes in $F^N$ 
does not exceed $2^{(N \log_2 N)^2 (1+o(1))}$.
{\rm (b)} The number of different 
propelinear codes in $F^N$
does not exceed $q^{N^2 \log_2 N (1+o(1))}$. 
\end{theorem}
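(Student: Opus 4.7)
My plan rests on one elementary group-theoretic tool: in any finite group $\Gamma$ of order $K$, every subgroup $H$ can be generated by at most $\log_2 |H|\le \log_2 K$ elements, because a greedy generating sequence at least doubles the subgroup size at each step. Consequently, the number of subgroups of $\Gamma$ of order at most $M$ is bounded by the number of ordered tuples of length $\log_2 M$, i.e.\ by $K^{\log_2 M}$; and the total number of subgroups is at most $K^{\log_2 K}$.

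Apply this with $\Gamma=\mathrm{Aut}(G(F^N))$. By the structural description of automorphisms recalled in Section~\ref{s:pre}, $|\Gamma|=N!\,(q!)^N$, so
$$
\log_2|\Gamma|\;=\;\log_2 N!\,+\,N\log_2 q!\;=\;N\log_2 N\,(1+o(1))
$$
for fixed $q$ and $N\to\infty$, since the first term dominates.

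For part (a), observe that any transitive code $C$ containing $0$ satisfies $C=\mathrm{Aut}(C)\cdot 0$, so it equals $G\cdot 0$ for some subgroup $G\le\Gamma$ (one may take $G=\mathrm{Aut}(C)$ itself). Distinct transitive codes thus arise from distinct subgroups, and the count is bounded by the total number of subgroups of $\Gamma$:
$$
|\Gamma|^{\log_2|\Gamma|}\;=\;2^{(\log_2|\Gamma|)^2}\;=\;2^{(N\log_2 N)^2(1+o(1))}.
$$

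For part (b), the propelinear structure gives a subgroup $G=\{\varphi_a:a\in C\}\le\Gamma$ that acts regularly on $C$, so $|G|=|C|\le q^N$. Different propelinear codes with a fixed base point $0$ yield different subgroups (again $C=G\cdot 0$), and the propelinear structures with base point $a\ne 0$ are exhausted by translations, contributing at most an extra factor $q^N$ which is absorbed in the $o(1)$. Thus I only need to count subgroups of $\Gamma$ of order at most $q^N$, hence generated by at most $N\log_2 q$ elements. By the tool above this number is at most
$$
|\Gamma|^{N\log_2 q}\;=\;2^{N\log_2 q\cdot N\log_2 N\,(1+o(1))}\;=\;q^{N^2\log_2 N\,(1+o(1))},
$$
which is the desired bound. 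The only delicate step is the uniform asymptotic bookkeeping that ensures the lower-order contributions from $\log_2 N!$, $N\log_2 q!$, and the $q^N$ factor from choice of base point all collapse into the stated $(1+o(1))$ error; beyond that, the argument is essentially a counting of generator-tuples.
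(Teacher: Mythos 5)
Your argument is correct and follows essentially the same route as the paper's: bound the number of subgroups of $\mathrm{Aut}(G(F^N))$ by counting generating tuples of length $\log_2$ of the (sub)group order, then use that a transitive code containing $0$ is the orbit of $0$ under its automorphism group, and that a propelinear structure is a subgroup of order $|C|\le q^N$. The only difference is some extra (harmless) bookkeeping about base points, which the paper's definitions make unnecessary since transitive and propelinear codes are required to contain $0$.
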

\begin{IEEEproof}
Since every subgroup of $\mathrm{Aut}(F^N)$ 
is generated by at most $\log_2 |\mathrm{Aut}(F^N)|$
elements, the number of subgroups is less than
$|\mathrm{Aut}(F^N)|^{\log_2|\mathrm{Aut}(F^N)|}
= 2^{(N \log_2 N)^2 (1+o(1))}$ 
(recall that 
$|\mathrm{Aut}(F^N)|=(q!)^N N! = N^{N(1+o(1))}$).
Since every transitive code $C$ 
containing $\bar 0$ is uniquely determined
by its automorphism group (indeed, 
$C$ is the orbit of $\bar 0$ under $\mathrm{Aut}(F^N)$),
statement (a) follows. 

The automorphisms assigned to the codewords 
of a propelinear code $C$ form a group 
of order $|C|\leq q^N$. 
It is generated by 
at most $\log_2 q^N=N\log_2 q$ elements; 
each of them can be chosen in less than 
$|\mathrm{Aut}(F^N)|=N^{N(1+o(1))}$ ways;
(b) follows.
\end{IEEEproof}
%========================================
\section{Proof of Theorem~\ref{th}}\label{s:proof}
Let $H\subset F^n$ be a linear code 
and let $f:H \to F$ be a quadratic function.
The key point in the proof 
is the following simple statement.

\begin{lemma}\label{l:1} 
Let $f'(x)=f(x)+\beta x_j$ for some  
$j\in \{1,\ldots,n\}$, $\beta\in F$.
Then $ C(H,f') = \Pi_j^{\beta} C(H,f) $ 
where $\Pi_j^{\beta}$ 
is the coordinate permutation that sends 
the $j$'th coordinate of the block 
$v_{\alpha+\beta}$
to the $j$'th coordinate of the block $v_\alpha$
for all $\alpha\in F$ 
and fixes the other coordinates.
\end{lemma}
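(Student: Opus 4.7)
The permutation $\Pi_j^\beta$ acts on the $qn+1$ coordinates of $F^{qn+1}$ by cycling the $q$ positions $\{(\alpha,j):\alpha\in F\}$ according to the shift $\alpha\mapsto\alpha-\beta$ and fixing every other position, so it is an honest bijection of the ambient space with inverse $\Pi_j^{-\beta}$. I would first prove the single inclusion $\Pi_j^\beta C(H,f)\subseteq C(H,f')$; applying the same statement to the pair $(f',-\beta)$ (for which $f'(x)+(-\beta)x_j=f(x)$) then gives $\Pi_j^{-\beta} C(H,f')\subseteq C(H,f)$, and together with the invertibility of $\Pi_j^\beta$ this yields equality.

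To verify the inclusion, fix an arbitrary $w=((v_\alpha)_{\alpha\in F},p)\in C(H,f)$ and write $w'=\Pi_j^\beta w=((v'_\alpha)_{\alpha\in F},p)$, where $(v'_\alpha)_j=(v_{\alpha+\beta})_j$ and $(v'_\alpha)_i=(v_\alpha)_i$ for $i\ne j$. The block-sum condition is easy: at each coordinate $i\ne j$ the sum $\sum_\alpha (v'_\alpha)_i$ visibly equals $c_i$, and at coordinate $j$ the reindexing $\gamma=\alpha+\beta$ gives $\sum_\alpha(v_{\alpha+\beta})_j=\sum_\gamma(v_\gamma)_j=c_j$, so $\sum_\alpha v'_\alpha=c\in H$.

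The nontrivial point is the parity check. Only column $j$ is perturbed, so $|v'_\alpha|-|v_\alpha|=(v_{\alpha+\beta})_j-(v_\alpha)_j$. Multiplying by $\alpha$, summing over $\alpha\in F$, and using the same reindexing $\gamma=\alpha+\beta$ (so that $\alpha=\gamma-\beta$) collapses the two difference terms into a single contribution:
$$
\sum_{\alpha\in F}\alpha|v'_\alpha|\;=\;\sum_{\alpha\in F}\alpha|v_\alpha|\;-\;\beta c_j.
$$
This $-\beta c_j$ correction is exactly what $f'(c)-f(c)=\beta c_j$ compensates, so the identity $p=\sum_\alpha\alpha|v_\alpha|+f(c)$ immediately rearranges to $p=\sum_\alpha\alpha|v'_\alpha|+f'(c)$, proving $w'\in C(H,f')$.

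I do not foresee any real obstacle: the lemma is pure bookkeeping, and the only place requiring care is the reindexing of a finite sum over the field. The conceptual content is that the linear perturbation $\beta x_j$ of $f$ is absorbed by permuting the $j$-th coordinates across blocks according to the shift $\alpha\mapsto\alpha-\beta$ on the block index -- the mechanism that the proof of Theorem~\ref{th} will then iterate coordinate by coordinate, combined with Lemma~\ref{l:qua}, to manufacture the propelinear automorphisms from the quadratic function $f$.
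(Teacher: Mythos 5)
Your proof is correct and follows essentially the same route as the paper's: apply $\Pi_j^\beta$, reindex the sum over $F$ via $\gamma=\alpha+\beta$, and observe that the resulting correction $\beta c_j$ is exactly $f'(c)-f(c)$. The only difference is that you make explicit the reverse inclusion (via the pair $(f',-\beta)$ and invertibility of $\Pi_j^\beta$), which the paper leaves implicit; this is a harmless tightening, not a different argument.
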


\begin{IEEEproof} 
Let us consider the codeword 
$x=((v_\alpha)_{\alpha\in F},z)$ of $C(H,f)$. 
It satisfies 
$z=\sum_{\alpha\in F} \alpha|v_\alpha| + f(c)$.
After the coordinate permutation $\Pi_j^{\beta}$, 
we obtain the word 
$y=\Pi_j^{\beta} x=((u_\alpha)_{\alpha\in F},z)$ 
where for all $\alpha$ 
the word $u_\alpha$ coincides with $v_\alpha$ 
in all positions except the $j$th, 
$u_{\alpha,j}$ which is equal to 
$v_{\alpha+\beta,j}$. Now we have
\begin{IEEEeqnarray*}{r;l}
  z =& \sum_{\alpha\in F} \alpha|v_\alpha| + f(c) 
\\=&
\sum_{\alpha\in F}  \sum_{k\neq j} \alpha v_{\alpha,k} +
\sum_{\alpha\in F}  \alpha v_{\alpha,j} + f(c) 
\\=&
\sum_{\alpha\in F}  \sum_{k\neq j} \alpha u_{\alpha,k} +
\sum_{\alpha\in F}  \alpha u_{\alpha-\beta,j} + f(c)
\\=&  
\sum_{\alpha\in F}  \sum_{k\neq j} \alpha u_{\alpha,k} +
\sum_{\alpha\in F}  (\alpha+\beta) u_{\alpha,j} + f(c)
\\=& 
\sum_{\alpha\in F}  \sum_{k=1}^n \alpha u_{\alpha,k} +
\beta\sum_{\alpha\in F}   u_{\alpha,j} + f(c) 
\\=& 
\sum_{\alpha\in F} \alpha|u_\alpha| + f(c) +  \beta c_j, 
\end{IEEEeqnarray*}
(we used that 
$c=(c_1,\ldots,c_n)=\sum v_\alpha = \sum u_\alpha$)
which proves that $\Pi_j^{\beta} (x) \in C(H,f')$.
\end{IEEEproof}

Now denoting 
$\Pi^c = 
\Pi_1^{\beta_1^c}\Pi_2^{\beta_2^c}
  \ldots\Pi_n^{\beta_n^c}$,
where the coefficients 
$\beta_{j}^{c}$ are from (\ref{eq:qua}),
we get the following fact, 
which immediately proves 
the transitivity of the code:

\begin{lemma}\label{l:2} 
For every codeword 
$w=((w_\alpha)_{\alpha\in F},z)$ of $C(H,f)$, 
the transform
$\Phi_w(v) = w+\Pi^c(v)$, 
where $c=\sum_{\alpha\in F}w_\alpha$, 
is an automorphism of $C(H,f)$, which sends
the all-zero word to $w$.
\end{lemma}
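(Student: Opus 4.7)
The plan is to verify in turn that $\Phi_w$ is an automorphism of the ambient Hamming graph, that it sends $0$ to $w$, and that it preserves the code $C(H,f)$ setwise.

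\textbf{Graph-automorphism and image of zero.} The map $\Pi^c$ is a composition of the coordinate permutations $\Pi_j^{\beta_j^c}$ from Lemma~\ref{l:1}, hence an automorphism of $G(F^{qn+1})$. Adding $w$ acts coordinatewise as the alphabet permutations $x\mapsto x+w_i$, which is again an automorphism of the Hamming graph. The composition $\Phi_w$ is therefore an automorphism of $G(F^{qn+1})$, and $\Phi_w(0)=w+\Pi^c(0)=w$. Being an isometry, it is a bijection of $F^{qn+1}$, so to show it stabilizes $C(H,f)$ it suffices to prove $\Phi_w(C(H,f))\subseteq C(H,f)$.

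\textbf{Pushing $\Pi^c$ through the construction.} By iterating Lemma~\ref{l:1} along $j=1,\dots,n$ with $\beta=\beta_j^c$, I obtain $\Pi^c C(H,f)=C(H,f_c)$, where
\[
f_c(x)\;=\;f(x)+\beta_1^c x_1+\cdots+\beta_n^c x_n.
\]
Now apply Lemma~\ref{l:qua} at the argument $x=0$: using $f(0)=0$, relation (\ref{eq:qua}) yields $f(c)=\beta_0^c$, so $f_c(x)=f(x+c)-f(c)$.

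\textbf{Checking the defining relation of $C(H,f)$.} Take an arbitrary $v=((v_\alpha)_{\alpha\in F},p')\in C(H,f)$. Then $\Pi^c(v)=((u_\alpha)_{\alpha\in F},q)\in C(H,f_c)$, with $d:=\sum_\alpha u_\alpha\in H$ and $q=\sum_\alpha\alpha|u_\alpha|+f(c+d)-f(c)$. Writing $\Phi_w(v)=((w_\alpha+u_\alpha)_{\alpha\in F},\,p+q)$ and recalling $p=\sum_\alpha\alpha|w_\alpha|+f(c)$, I compute
\[
p+q\;=\;\sum_{\alpha\in F}\alpha(|w_\alpha|+|u_\alpha|)\;+\;f(c+d).
\]
Since $H$ is linear, $\sum_\alpha(w_\alpha+u_\alpha)=c+d\in H$, so $\Phi_w(v)$ satisfies the defining relation of $C(H,f)$ at the sum $c+d\in H$.

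\textbf{Main obstacle.} The only delicate point is the matching of the scalar coordinate: after the coordinate-permutation step the ``$f$-shift'' has the form $f(x+c)-\beta_0^c$ rather than $f(x+c)-f(c)$, and it is precisely the normalization $f(0)=0$ in the hypothesis of Theorem~\ref{th} that makes the two agree, so that the translation by $w$ on the last coordinate compensates exactly for the $f$-twist introduced by $\Pi^c$. Everything else reduces to bookkeeping with Lemma~\ref{l:1}, Lemma~\ref{l:qua}, and the linearity of $H$.
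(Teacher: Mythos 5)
Your proof is correct and follows essentially the same route as the paper: iterate Lemma~\ref{l:1} over $j=1,\dots,n$ to identify $\Pi^c C(H,f)$ with $C(H,f_c)$ where $f_c(x)=f(x)+\beta_1^c x_1+\cdots+\beta_n^c x_n$, then use (\ref{eq:qua}) together with the normalization $f(0)=0$ (so that $\beta_0^c=f(c)$) to see that adding $w$ restores the defining relation of $C(H,f)$ at the sum $c+d$. The only difference is presentational — you phrase the intermediate step as a set identity $\Pi^c C(H,f)=C(H,f_c)$ while the paper tracks the constraint equation on a single codeword — and your explicit remark that surjectivity onto the code follows from bijectivity of $\Phi_w$ is a harmless addition.
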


\begin{IEEEproof}
Consider $v = ((v_\alpha)_{\alpha\in F},s)$ 
from $C(H,f)$.
It satisfies 
$s=\sum_{\alpha\in F} \alpha|v_\alpha| + f(d)$, 
where $d=\sum_\alpha v_\alpha$.
Applying Lemma~\ref{l:1} with $j=1,\ldots,n$, we see that 
$\Pi^c (v) = ((u_\alpha)_{\alpha\in F},s)$ satisfies 
$s=\sum_{\alpha\in F} \alpha|u_\alpha| + f(d) +
\beta_1^c d_1 + \ldots + \beta_n^c d_n$, 
where $d=(d_1,\ldots,d_n)=\sum_\alpha u_\alpha$.
Adding $w=((w_\alpha)_{\alpha\in F},z)$, we obtain
$w+\Pi^c (v) = ((w_\alpha+u_\alpha),r)$, where 
\begin{IEEEeqnarray*}{l;c;l}
r &=& \sum_{\alpha\in F} \alpha|u_\alpha| + f(d) 
+ \beta_1^c d_1 + \ldots + \beta_n^c d_n \\ 
 &&{} + \sum_{\alpha\in F} \alpha|w_\alpha| + f(c)
 \\  &=& 
\sum_{\alpha\in F} \alpha|u_\alpha+w_\alpha| 
 + f(d+c) - \beta_0^c + f(c).
\end{IEEEeqnarray*}
But $f(c) = f(\bar 0)+\beta_0^c$, 
as we see from (\ref{eq:qua}).
Since $f(\bar 0)=0$, we have proved that 
$w+\Pi^c (v)$ belongs to $C(H,f)$.
\end{IEEEproof}

So, we get the transitivity. It remains to prove that
the set of $\Phi_w$, $w\in C(H,f)$ is closed under composition.

\begin{lemma}\label{l:pipi}
 For every $c,d\in H$ the composition $\Pi^c\Pi^d$ equals~$\Pi^{c+d}$.
\end{lemma}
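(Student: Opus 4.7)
The plan is to reduce the identity $\Pi^c\Pi^d=\Pi^{c+d}$ to two elementary facts about the elementary operators $\Pi_j^\beta$: first, that $\Pi_j^\beta$ and $\Pi_k^\gamma$ commute whenever $j\ne k$; and second, that at a fixed coordinate position the operators compose additively, $\Pi_j^\beta\,\Pi_j^\gamma=\Pi_j^{\beta+\gamma}$.

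For the first fact I would observe directly from Lemma~\ref{l:1} that $\Pi_j^\beta$ is a permutation of the $q$ coordinates indexed by the $j$-th position of the blocks $v_\alpha$, $\alpha\in F$, and acts as the identity on all remaining coordinates of $F^{qn+1}$. Hence for $j\ne k$ the operators $\Pi_j^\beta$ and $\Pi_k^\gamma$ affect disjoint coordinate sets and commute. For the second fact I would use the explicit formula $u_{\alpha,j}=v_{\alpha+\beta,j}$ recorded in the proof of Lemma~\ref{l:1}: applying it twice shows that $(\Pi_j^\beta\Pi_j^\gamma)(v)$ has $(\alpha,j)$-th entry equal to $v_{\alpha+\beta+\gamma,j}$, which is exactly the action of $\Pi_j^{\beta+\gamma}$.

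With these ingredients in hand, the rest is bookkeeping. I would expand
\[
\Pi^c\,\Pi^d \;=\; \Pi_1^{\beta_1^c}\cdots\Pi_n^{\beta_n^c}\,\Pi_1^{\beta_1^d}\cdots\Pi_n^{\beta_n^d},
\]
use the commutation relations to gather together the two operators acting at each fixed position $j$, apply the additive composition rule at each position to obtain $\prod_{j=1}^n \Pi_j^{\beta_j^c+\beta_j^d}$, and then invoke the linearity $\beta_j^{c+d}=\beta_j^c+\beta_j^d$ supplied by Lemma~\ref{l:qua} to rewrite the product as $\Pi^{c+d}$.

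I do not anticipate any real obstacle; the argument is essentially three one-line observations chained together. The only point requiring a little care is getting the direction of the shift right in the single-position composition rule, which is immediate from the explicit formula in Lemma~\ref{l:1}.
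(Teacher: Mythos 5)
Your proposal is correct and follows essentially the same route as the paper's own proof: expand the product, commute the factors acting on distinct positions, merge the two factors at each position via $\Pi_j^{\beta}\Pi_j^{\gamma}=\Pi_j^{\beta+\gamma}$, and finish with the linearity $\beta_j^{c+d}=\beta_j^c+\beta_j^d$ from Lemma~\ref{l:qua}. The only difference is that you spell out the commutation and single-position composition facts that the paper treats as immediate from the definitions, and your verification of the shift direction is accurate.
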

\begin{IEEEproof}
 As follows directly from the definitions of $\Pi^{c}$ and~$\Pi^\beta_i$,
\begin{IEEEeqnarray*}{r;c;l} \Pi^c\Pi^d
 &=& \Pi_1^{\beta_1^c}\ldots\Pi_n^{\beta_n^c}\Pi_1^{\beta_1^d}\ldots\Pi_n^{\beta_n^d}\\
 &=& \Pi_1^{\beta_1^c}\Pi_1^{\beta_1^d}\Pi_2^{\beta_2^c}\Pi_2^{\beta_2^d}\ldots\Pi_n^{\beta_n^c}\Pi_n^{\beta_n^d}.
\end{IEEEeqnarray*} 
By the definition of $\Pi^\beta_i$, we have
 $\Pi_i^{\beta_i^c}\Pi_i^{\beta_i^d}=\Pi_i^{\beta_i^c+\beta_i^d}.$
 But, by Lemma~\ref{l:qua}, $\beta_i^c+\beta_i^d=\beta_i^{c+d}$. Finally, we have
 $\Pi^c\Pi^d = \Pi_1^{\beta_1^{c+d}}\ldots\Pi_n^{\beta_n^{c+d}} = \Pi^{c+d}.$
 \end{IEEEproof}

Now, consider $w=((w_\alpha)_{\alpha\in F},z)$ and
$v = ((v_\alpha)_{\alpha\in F},s)$ form $C(H,f)$. 
Denote $c=\sum_{\alpha}w_\alpha$ and $d=\sum_{\alpha}v_\alpha$; 
observe that the permutation $\Pi^c$ 
will not change the value of the last sum.
Then, 
\begin{IEEEeqnarray*}{r;c;l} 
 \Phi_w\Phi_v(\cdot)
&=&  w+\Pi^c(v+\Pi^d(\cdot))\\
&=& w+\Pi^c(v)+\Pi^c(\Pi^d(\cdot))
= u+\Pi^e(\cdot),
\end{IEEEeqnarray*}
where 
$u=((u_\alpha)_{\alpha\in F},t)=w+\Pi^c(v)$, 
$e=\sum_{\alpha}u_\alpha=c+d$.
This completes the proof of the theorem.

%========================================
\section{Remarks, examples, and further research}\label{s:rem}
\subsection{On the group related to $C(H,f)$}
As follows from the definition, 
to every codeword $v$ of a propelinear code $C$
there corresponds an automorphism $\Phi_v$ of $C$
and the set $\{\Phi_v : v\in C\}$ forms
a subgroup of the automorphism group of $C$.
Although such a subgroup, a \emph{propelinear structure}, 
is not unique in general 
(see \cite{BRMS:2012} and also Remark~\ref{rem:nonunique} below),
in the previous section we explicitly defined 
a variant of the choice of $\Phi_v$ for every $v\in C(H,f)$.
Below, we provide two remarks with examples 
about the propelinear structure
defined in the previous section.

\begin{remark}\label{rem:order}
For every $v\in C(H,f)$, the element $\Phi_v$ 
%of the related group 
has order $1$, $p$, or $p^2$, 
where $p$ is the prime that divides $q$. 
Indeed, every permutation $\Pi^c$ is of order
$1$ or $p$; hence, $(\Phi_v)^p$ corresponds to 
the identity permutation and has order $1$ or $p$.
\end{remark}

As an example, we consider the (non-perfect)
code $C(H,f)$ constructed with the following parameters: 
$q=2$, $n=2$, $H=F^2$, $f(x_1,x_2)=x_1x_2$.
From (\ref{eq:qua}) we find 
$\beta^{01}_1 = 1$, $\beta^{01}_2 = 0$, 
$\beta^{10}_1 = 0$, $\beta^{10}_2 = 1$, 
$\beta^{11}_1 = 1$, $\beta^{11}_2 = 1$.
The group of automorphisms related with the propelinear code $C(H,f)$
is generated by three elements $\Phi_{u}$, $\Phi_{v}$, $\Phi_{w}$ with
$u=(11\,00\,1)$, 
$v=(10\,00\,0)$, 
$w=(10\,10\,1)$ and the corresponding coordinate permutations
$\Pi^{11}=(13)(24)$, $\Pi^{10} = (24)$, $\Pi^{00} = \mathrm{Id}$.
The first element $\Phi_{u}$ generates a cycle 
with the corresponding codewords 
$(00\,00\,0)$, $(11\,00\,1)$, 
$(11\,11\,0)$, $(00\,11\,1)$.
The second generating element $\Phi_{v}$ adds four more codewords:
$(10\,00\,0)$, $(00\,01\,1)$, 
$(01\,11\,0)$, $(11\,10\,1)$; 
the corresponding automorphisms are of order $2$.
The group generated by $\Phi_{u}$ and $\Phi_{v}$ 
is described by the orders 
of $\Phi_{u}$, $\Phi_{v}$ and the identity 
$\Phi_{v}\Phi_{u}\Phi_{v} = (\Phi_{u})^{-1}$,
and it is isomorphic to the dihedral group $D_4$.
The last generating element $\Phi_{w}$ 
commutes with all other elements and has order $2$.
It follows that the group of automorphisms related with
$C(H,f)$ is isomorphic to the direct product
$D_4\times Z_2$,
where $Z_2$ is the cyclic group of order $2$.

\begin{remark}\label{rem:nonunique}
If $H\ne F^n$, then there is more than one quadratic 
representations of every quadratic function on $H$.
The coefficients $\beta^c_i$ and, as follows,
the subgroup $\{\Phi_v : v\in C\}$ of the automorphism 
group of the code
%related to the resulting code
depend on the representation;
so, there are several propelinear structures 
corresponding to the same code $C(H,f)$.
For example, the all-zero function over $H=\{000,111\}$
($q=2$) can be represented as $f(x_1,x_2,x_3)=0$
or, e.g., as $f(x_1,x_2,x_3)=x_1x_2+x_1x_3$.
The resulting code is the same 
(a $1$-perfect Hamming code of length $7$);
but in the first case, the group is
%, of course, an additive subgroup of $F^n$ and 
isomorphic to $Z_2^4$,
while the second representation 
leads to a group isomorphic to $Z_4\times Z_2^2$.
The general fact that several propelinear 
structures can correspond to the same
(perfect) code was well demonstrated in \cite{BRMS:2012}.
\end{remark}

%----------------------------------------
\subsection{Transitive Vasil'ev codes of length $15$}\label{ss:15}
There are $201$ nonequivalent transitive $1$-perfect codes of length $15$
\cite[Table III]{OPP:15}.
Five of these codes are Vasil'ev codes, including the linear one;
their description can be found in \cite{Malyu:2004} 
(the four nonlinear codes
are denoted by $V4$, $V4^0$, $V4^1$, and $V22^02^1$).
Let $H$ be spanned by the words 
$u_1 = 1010101$, $u_2=0111100$, $u_4=0001111$, $u_{0}=1111111$.
Define the functions $f^{V4}$, $f^{V4^0}$, $f^{V4^1}$, $f^{V22^02^1}:H\to\{0,1\}$
by their sets of zeros 
$\{\bar 0, u_0,u_1,u_0+u_1\}$,
$\{\bar 0, u_1,u_2,u_1+u_2\}$,
$\{\bar 0, u_0+u_1,u_0+u_2,u_1+u_2\}$,
$\{\bar 0, u_0,u_1,u_2,u_4,u_0+u_1+u_2+u_4\}$, respectively.
Then the codes $C(H,f^{V4})$, $C(H,f^{V4^0})$, $C(H,f^{V4^1})$, $C(H,f^{V22^02^1})$
are representatives of the four equivalence classes of 
nonlinear transitive Vasil'ev codes of length $15$.
All these codes are propelinear \cite{BRMS:2012}.
Moreover, it can be directly checked that the functions 
%$f^{V4}$, $f^{V4^0}$, $f^{V4^1}$, $f^{V22^02^1}$
are quadratic:
\begin{IEEEeqnarray*}{r;c;l}
f^{V4}(x) &=&x_2x_4+x_2x_6+x_4x_6+x_2+x_4+x_6,\\
f^{V4^0}(x) &=&x_1x_6+x_2x_6+x_3x_6+x_1+x_2+x_3+x_6,\\
f^{V4^1}(x)&=&x_3 x_5+x_3+x_5,\\
f^{V22^02^1}(x)&=&
x_3x_4+x_3x_5+x_3x_7+x_4x_5+x_4x_7+x_5x_7
\\&&{}+x_3+x_4+x_5+x_7;
\end{IEEEeqnarray*}
so, the corresponding codes meet the hypothesis of Theorem~\ref{th}.
Therefore, all transitive Vasil'ev codes of length $15$ belong to the class
considered in the current paper.

%----------------------------------------
\subsection{Transitive functions}\label{ss:fin}
For further development of the topic, 
it would be interesting to consider 
a wider class of functions resulting 
in transitive (propelinear) codes.
Such functions should have properties 
similar to transitivity (propelinearity) of codes:

\begin{problem} 
For a vector space $V$ 
and a group $\mathcal A$ 
of linear permutations of $V$,
find non-quadratic functions $f$ 
such that for every $c$ from $V$ 
there exists $\mu\in \mathcal A$
meeting $f(\mu(x)+c)=f(x)+l(x)$ 
for some affine $l$.
For instance, for constructing transitive (propelinear)
$1$-perfect codes as above, 
we can take $V=H$ and 
$\mathcal A\subset \mathrm{Aut}(H)$.
\end{problem} 
\newpage
%\bibliographystyle{IEEEtranS}
%\bibliography{krotov}

\begin{thebibliography}{10}
\providecommand{\url}[1]{#1}
\csname url@samestyle\endcsname
\providecommand{\newblock}{\relax}
\providecommand{\bibinfo}[2]{#2}
\providecommand{\BIBentrySTDinterwordspacing}{\spaceskip=0pt\relax}
\providecommand{\BIBentryALTinterwordstretchfactor}{4}
\providecommand{\BIBentryALTinterwordspacing}{\spaceskip=\fontdimen2\font plus
\BIBentryALTinterwordstretchfactor\fontdimen3\font minus
  \fontdimen4\font\relax}
\providecommand{\BIBforeignlanguage}[2]{{%
\expandafter\ifx\csname l@#1\endcsname\relax
\typeout{** WARNING: IEEEtranS.bst: No hyphenation pattern has been}%
\typeout{** loaded for the language `#1'. Using the pattern for}%
\typeout{** the default language instead.}%
\else
\language=\csname l@#1\endcsname
\fi
#2}}
\providecommand{\BIBdecl}{\relax}
\BIBdecl

\bibitem{BRMS:2012}
J.~Borges, I.~Y. Mogilnykh, J.~Rif\`a, and F.~I. Solov'eva, ``Structural
  properties of binary propelinear codes,''
  \emph{\href{http://aimsciences.org/journals/amc/}{Adv. Math. Commun.}},
  vol.~6, no.~3, pp. 329--346, 2012, \DOI{10.3934/amc.2012.6.329}.

\bibitem{BRMS:num_prop}
\BIBentryALTinterwordspacing
J.~Borges, I.~Y. Mogilnykh, J.~Rif\`a, and F.~I. Solov'eva, 
``On the number of nonequivalent propelinear extended perfect codes,''
  \emph{\href{http://www.combinatorics.org}{Electr. J. Comb.}}, vol.~20, no.~2,
  pp.
  \href{http://www.combinatorics.org/ojs/index.php/eljc/article/view/v20i2p37}{P37}(1--14),
  2013. [Online]. Available:
  \url{http://www.combinatorics.org/ojs/index.php/eljc/article/view/v20i2p37}
\BIBentrySTDinterwordspacing

\bibitem{BorRif:1999}
J.~Borges and J.~Rif\`a, ``A characterization of $1$-perfect additive codes,''
  \emph{\href{http://ieeexplore.ieee.org/xpl/RecentIssue.jsp?punumber=18}{IEEE
  Trans. Inf. Theory}}, vol.~45, no.~5, pp. 1688--1697, 1999,
  \DOI{10.1109/18.771247}.

\bibitem{Brouwer}
A.~E. Brouwer, A.~M. Cohen, and A.~Neumaier, \emph{Distance-Regular
  Graphs}.\hskip 1em plus 0.5em minus 0.4em\relax Berlin: Springer-Verlag,
  1989.

\bibitem{Carlet:Z2k-linear}
C.~Carlet, ``{$Z_{2^k}$}-{L}inear codes,''
  \emph{\href{http://ieeexplore.ieee.org/xpl/RecentIssue.jsp?punumber=18}{IEEE
  Trans. Inf. Theory}}, vol.~44, no.~4, pp. 1543--1547, 1998,
  \DOI{10.1109/18.681328}.

\bibitem{HammonsOth:Z4_linearity}
A.~R. Hammons, Jr, P.~V. Kumar, A.~R. Calderbank, N.~J.~A. Sloane, and
  P.~Sol\'e, ``The {$Z_4$}-linearity of {K}erdock, {P}reparata, {G}oethals, and
  related codes,''
  \emph{\href{http://ieeexplore.ieee.org/xpl/RecentIssue.jsp?punumber=18}{IEEE
  Trans. Inf. Theory}}, vol.~40, no.~2, pp. 301--319, 1994,
  \DOI{10.1109/18.312154}.

\bibitem{HonNech}
T.~Honold and A.~A. Nechaev, ``Fully weighted modules and representations of
  codes,'' \emph{\href{http://link.springer.com/journal/11122}{Probl. Inf.
  Transm.}}, vol.~35, no.~3, pp. 205--223, 1999, translated from \em Probl.
  Peredachi Inf.\em, 35(3): 18-39, 1999.

\bibitem{Kro:2000:Z4_Perf}
D.~S. Krotov, ``{$Z_4$}-{L}inear perfect codes,''
  \emph{\href{http://www.mathnet.ru/php/journal.phtml?jrnid=da\&option_lang=eng}{Diskretn.
  Anal. Issled. Oper.}, Ser.1}, vol.~7, no.~4, pp. 78--90, 2000, in Russian.
  Translated at \url{http://arxiv.org/abs/0710.0198}.

\bibitem{Kro:2007Z2k}
D.~S. Krotov, ``{$Z_{2^k}$}-{D}ual binary codes,''
  \emph{\href{http://ieeexplore.ieee.org/xpl/RecentIssue.jsp?punumber=18}{IEEE
  Trans. Inf. Theory}}, vol.~53, no.~4, pp. 1532--1537, 2007,
  \DOI{10.1109/TIT.2007.892787}.

\bibitem{MWS}
F.~J. MacWilliams and N.~J.~A. Sloane, \emph{The Theory of Error-Correcting
  Codes}.\hskip 1em plus 0.5em minus 0.4em\relax Amsterdam, Netherlands: North
  Holland, 1977.

\bibitem{Malyu:2004}
S.~A. Malyugin, ``On the equivalence classes of perfect binary codes of length
  $15$,'' Sobolev Institute of Mathematics, Novosibirsk, Preprint 138, 2004, in
  Russian.

\bibitem{Nechaev82}
A.~A. Nechaev, ``Trace-function in {G}alois ring and noise-stable codes,'' in
  \emph{Proc. V All-Union Symp. on Theory of Rings, Alg. and Mod.},
  Novosibirsk, Russia, 1982, p.~97, in Russian.

\bibitem{Nechaev:Kerdock_cyclic_form}
A.~A. Nechaev, ``Kerdock code in a cyclic form,''
  \emph{\href{http://www.degruyter.de/journals/dma/detail.cfm}{Discrete Math.
  Appl.}}, vol.~1, no.~4, pp. 365--384, 1991, \DOI{10.1515/dma.1991.1.4.365}
  Translated from {\it Diskretnaya Matematika}, 1(4):123-139, 1989.

\bibitem{OPP:15}
P.~R.~J. {\"O}sterg{\aa}rd, O.~Pottonen, and K.~T. Phelps, ``The perfect binary
  one-error-correcting codes of length 15: Part {II}---properties,''
  \emph{\href{http://ieeexplore.ieee.org/xpl/RecentIssue.jsp?punumber=18}{IEEE
  Trans. Inf. Theory}}, vol.~56, no.~6, pp. 2571--2582, 2010,
  \DOI{10.1109/TIT.2010.2046197}.

\bibitem{Pot:trans}
V.~N. Potapov, ``A lower bound for the number of transitive perfect codes,''
  \emph{\href{http://link.springer.com/journal/volumesAndIssues/11754}{J. Appl.
  Ind. Math.}}, vol.~1, no.~3, pp. 373--379, 2007,
  \DOI{10.1134/S199047890703012X} translated from
  \href{http://www.mathnet.ru/php/journal.phtml?jrnid=da\&option_lang=eng}{Diskretn.
  Anal. Issled. Oper.}, Ser.~1, 13(4):49-59, 2006.

\bibitem{RBH:89}
J.~Rif\`a, J.~M. Basart, and L.~Huguet, ``On completely regular propelinear
  codes,'' in \emph{Applied Algebra, Algebraic Algorithms and Error-Correcting
  Codes, 6th Int. Conference, AAECC-6 Rome, Italy, July 4--8, 1988
  Proceedings}, ser. \href{http://link.springer.com/bookseries/558}{Lect. Notes
  Comput. Sci.}\hskip 1em plus 0.5em minus 0.4em\relax Springer-Verlag, 1989,
  vol. 357, pp. 341--355,
  \DOIURL{10.1007/3-540-51083-4\_71}{10.1007/3-540-51083-4_71}.

\bibitem{Schonheim68}
J.~Sch{\"o}nheim, ``On linear and nonlinear single-error-correcting $q$-ary
  perfect codes,''
  \emph{\href{http://www.sciencedirect.com/science/journal/00199958}{Inf.
  Control}}, vol.~12, no.~1, pp. 23--26, 1968,
  \DOI{10.1016/S0019-9958(68)90167-8}.

\bibitem{Sol:2005transitive}
F.~I. Solov'eva, ``On the construction of transitive codes,''
  \emph{\href{http://link.springer.com/journal/11122}{Probl. Inf. Transm.}},
  vol.~41, no.~3, pp. 204--211, 2005, \DOI{10.1007/s11122-005-0025-3}
  translated from
  \href{http://www.mathnet.ru/php/journal.phtml?jrnid=ppi\&option_lang=eng}{Probl.
  Peredachi Inf.}, 41(3):23-31, 2005.

\bibitem{Vas:nongroup_perfect}
Y.~L. Vasil'ev, ``On nongroup close-packed codes,'' in \emph{Problemy
  Kibernetiki}, 1962, vol.~8, pp. 337--339, in Russian, English translation in
  Probleme der Kybernetik, 8: 92-95, 1965.

\end{thebibliography}
%\end{document}
% Generated by IEEEtranS.bst, version: 1.13 (2008/09/30)
\providecommand\href[2]{#2} \providecommand\url[1]{\href{#1}{#1}}
  \def\DOI#1{{\small {DOI}:
  \href{http://dx.doi.org/#1}{#1}}}\def\DOIURL#1#2{{\small{DOI}:
  \href{http://dx.doi.org/#2}{#1}}}

\end{document}